\definecolor{myurlcolor}{rgb}{0,0,0.7}
\newcommand{\op}[1]{\operatorname{#1}}
\def\be{\begin{equation}}
\def\ee{\end{equation}}
\def\bea{\begin{eqnarray*}}
\def\eea{\end{eqnarray*}}
\def\ot{\otimes}
\theoremstyle{plain}
\newtheorem{thrm}{\protect\theoremname}
\newtheorem{lem}[thrm]{Lemma}
\newtheorem{exam}[thrm]{Example}
\providecommand{\theoremname}{Theorem}
\newcommand{\iinner}[2]{\langle #1 | #2\rangle}
\newcommand{\out}[2]{| #1\rangle\langle #2 |}
\DeclareMathOperator{\trace}{tr}
\newcommand{\ptr}[2]{\trace_{#1}({#2})}
\newcommand{\tr}[1]{\ptr{}{#1}}
\newcommand{\id}{\mathbf{1}}
\newcommand{\ids}[1]{\id^{#1}}
\newcommand*{\myproofname}{Proof}
\def\cF{\mathcal{F}}\def\cH{\mathcal{H}}
\def\cM{\mathcal{M}}
\def\cQ{\mathcal{Q}}
\def\cV{\mathcal{V}}
\theoremstyle{definition}
\theoremstyle{remark}
\begin{document}

 \author{Sunho Kim}
 \email{kimshanhao@126.com}
 \affiliation{Department of Mathematics, Harbin Institute of Technology, Harbin 150001, PR~China}

 \author{Longsuo Li}
 \email{Corresponding author: lilongsuo@126.com}
 \affiliation{Department of Mathematics, Harbin Institute of Technology, Harbin 150001, PR~China}
 
 \author{Asutosh Kumar}
\email{Corresponding author: asutoshk.phys@gmail.com}
\affiliation{P.G. Department of Physics, Gaya College, Magadh University, Rampur, Gaya 823001, India}
%

 \author{Junde Wu}
 \email{Corresponding author: wjd@zju.edu.cn}
 \affiliation{School of Mathematical Sciences, Zhejiang University, Hangzhou 310027, PR~China}

\title{Characterizing nonclassical correlations via local quantum Fisher information}
\begin{abstract}
We define two ways of quantifying the quantum correlations based on quantum Fisher information (QFI) in order to study the quantum correlations as a resource in quantum metrology. By investigating the hierarchy of measurement-induced Fisher information introduced in Lu et al. [X. M. Lu, S. Luo, and C. H.
Oh, Phys Rev. A {\bf 86}, 022342 (2012)], 
we show that the presence of quantum correlation can be confirmed by the difference of the Fisher information induced by the measurements of two hierarchies. In particular, the quantitative quantum correlations based on QFI coincide with the geometric discord for pure quantum states.
\end{abstract}
\maketitle

\section{introduction}

The classification and quantification of correlations are of fundamental importance in science. In the
emerging field of quantum information, various measures of correlations, such as entanglement \cite{Bennett, Vidal, Horodecki}, classical correlations, and quantum correlations \cite{Ollivier, Henderson} have been introduced and studied extensively recently. In particular, the quantum discord was introduced by Ollivier and Zurek \cite{Ollivier} and Henderson and Vedral \cite{Henderson, Vedral} as a measure of quantum correlation beyond entanglement. Another version of quantum discord, namely the geometric discord, was introduced by Daki$\acute{c}\ et\ al$. in Ref. \cite{Dakic}.

Recently, nonclassical correlations (a family of discordlike measures) have been characterized in Ref. \cite{Girolami} within the framework of local quantum uncertainty calculated by utilizing skew information. Nonclassical correlations have also been investigated by other authors from other perspectives. 
In particular, ``discriminating length''--a discordlike quantity, has been proposed as a bona fide measure of nonclassical correlations in Refs. \cite{Pasquale1, Pasquale2}. It exploits the quantum Chernov bound which can be seen as a counterpart of the Cramer-Rao bound for discrimination purposes of discrete and continuous parameters.
In the following, we observe the same spirit and 
characterize quantum correlations using quantum Fisher information (QFI), a figure of central merit in the quantum estimation theory. The amount of quantum correlation, quantified in terms of the local quantum Fisher information, present in a bipartite mixed probe state can then be used as a resource in quantum metrology.

Information is always encoded in states of physical systems, usually in the form of parameters. In order to extract the encoded information, one has to identify different states, usually via measurements and parameter estimation.
In the classical case, the Fisher information is the central concept in parameter estimation analysis due to the Cram$\acute{e}$r-Rao inequality, which sets a basic lower bound to the variance of any unbiased estimator in terms of the Fisher information \cite{Cramer, Cover}. In the quantum case, the information carried by a physical parameter is usually captured or synthesized by QFI \cite{Helstrom, Holevo, Kay, Genoni}, which is the minimum achievable statistical uncertainty in the estimation of the parameter. Further, if the parameter $\theta$ be related with quantum evolution due to the fixed self-adjoint operator (observable) $H$, \textit{i.e.}, $\rho_\theta = e^{-i\theta H}\rho e^{i\theta H}$, then QFI is independent of the parameter $\theta$ \cite{Luo4} and represents the ``speed of evolution'' of the quantum state (see Ref.  \cite{note1}). 
%

%
In this paper, we address some intriguing questions relating the quantum correlations. First, how does the presence of quantum correlations affect the local quantum Fisher information (lQFI; see Secs.  \uppercase\expandafter{\romannumeral2} and  \uppercase\expandafter{\romannumeral3}A)? Next, for the zero discord bipartite states, which hierarchy of measurement leads to lQFI (as the maximum of Fisher information induced by measurements)? We endeavor to present some answers and ways to investigate these questions.

The paper is organized as follows. In Sec. \uppercase\expandafter{\romannumeral2}, we briefly review quantum Fisher information and quantum discord (both entropic and geometric versions). In Sec. \uppercase\expandafter{\romannumeral3}, we first propose a measure of quantum correlation to study the relationship between quantum correlations and the lQFI driven with a local unitary on party $a$. Next, we recollect the notion of measurement-induced Fisher information (MFI), and show that the MFI and the lQFI on party $b$ are identical for the classical-quantum (CQ) and the classical-classical (CC) states (zero discord bipartite states).  In Sec. \uppercase\expandafter{\romannumeral4}, we show that the quantitative quantum correlations based on QFI coincide with the geometric discord for a pure quantum
state. Finally, Sec. \uppercase\expandafter{\romannumeral5} concludes with a summary.

\section{Quantum Correlations}
\subsection{Quantum Fisher information}
\label{sec:QFI}

Consider we have an $N$-dimensional quantum state $\rho_\theta$ depending on an unknown parameter $\theta$. If we intend to draw out information about $\theta$ from $\rho_\theta$, a set of generalized quantum measurements, namely positive-operator-valued measures (POVMs), $M = \{M_x|M_x \geq 0, \sum_x M_x = \id \}$ should be performed. 
A POVM is said to be ``informationally complete'' if all states can be determined uniquely by the measurement statistics. A
symmetric informationally complete POVM is a special informationally complete POVM that is distinguished by global
symmetry between POVM elements \cite{sic}. What is more, a quantum measurement is ``Fisher-symmetric'' if it provides uniform and maximal information on all parameters that characterize the quantum state of interest \cite{f-sic}. Very recently, Zhu and Hayashi \cite{universal-f-sic} have studied the universally Fisher-symmetric informationally complete measurements.
According to classical statistical theory, the quality of any measurement result can be specified by a form of information called Fisher information (FI) \cite{Cramer} (see also Ref. \cite{note2})
\be\label{eq:FI}
F(\rho_\theta|M) = \frac{1}{4}\int dx p_M(x|\theta)\bigg[\frac{\partial\op{ln}p_M(x|\theta)}{\partial\theta}\bigg]^2,
\ee
where $p_M(x|\theta) = \tr{M_x\rho_\theta}$. 
Vanishing FI would mean that the measurement-induced statistics is robust in the sense that the probability or the likelihood function is large. 
Optimizing over all possible measurements, we can define the QFI \cite{Braunstein} as
\bea
\cF(\rho_\theta) = \sup_M F(\rho_\theta|M).
\eea
Further, we can rewrite the QFI as
\be\label{eq:QFI}
\cF(\rho_\theta) = \frac{1}{4}\tr{\rho_\theta L^2_\theta},
\ee
by introducing the so-called symmetric logarithmic derivative (SLD) $L_\theta$, satisfying the relation
\bea
\partial_\theta\rho_\theta \equiv \frac{\partial\rho_\theta}{\partial\theta} = \frac{\rho_\theta L_\theta + L_\theta\rho_\theta}{2}.
\eea
It is a Lyapunov matrix equation and a general solution exists. An explicit form of the SLD can be obtained, using the spectral decomposition $\rho_\theta = \sum_{i=1}^N p_i\out{\psi_i}{\psi_i}$, as
\be\label{eq:SLD}
L_\theta = 2\sum_{i,j} \frac{\bra{\psi_j}\partial_\theta\rho_\theta\ket{\psi_i}}{p_i+p_j}\out{\psi_j}{\psi_i},
\ee
where it is understood that the sum is on the indices for which $p_i+p_j \neq 0$. From Eq. (\ref{eq:SLD}) follows the explicit formula for QFI:
\be
\cF(\rho_\theta) = \sum_{i,j} \frac{|\bra{\psi_j}\partial_\theta\rho_\theta\ket{\psi_i}|^2}{2(p_i+p_j)}.
\ee
The above expression of the QFI is further simplified when the quantum state is pure, given by the wave function $|\psi_\theta\rangle$ ($\rho_\theta = \out{\psi_\theta}{\psi_\theta}$). In standard quantum mechanics, it is obtained straight away that $L_\theta = 2\partial_\theta\rho_\theta$ since $\partial_\theta\rho_\theta = \partial_\theta(\rho_\theta)^2 = (\partial_\theta\rho_\theta)\rho_\theta + \rho_\theta(\partial_\theta\rho_\theta)$. It follows that
\be 
\cF(|\psi_\theta\rangle) = \iinner{\partial_\theta\psi}{\partial_\theta\psi} + |\iinner{\partial_\theta\psi}{\psi}|^2.
\ee

In particular, if $\rho_\theta = e^{-i\theta H}\rho e^{i\theta H}$, where $H$ is a fixed observable (Hermitian operator) on the system $\cH$, then $\cF(\rho_\theta)$ is independent of the parameter $\theta$ \cite{Luo4}, and in this circumstance, $\cF(\rho_\theta)$ coincides with $\cF(\rho, H) \equiv \frac{1}{4}\tr{\rho L^2}$, which corresponds to the generalization of Eq. (\ref{eq:QFI}). Here $L$ is determined by
\bea
i[\rho, H] = \frac{1}{2}(L\rho + \rho L),
\eea
where the square bracket denotes the commutator between operators. Moreover, if we know the spectral decomposition $\rho = \sum_{i=1}^N p_i\out{\psi_i}{\psi_i}$, then the QFI can be evaluated as \cite{Luo4, Braunstein}
\be\label{eq:QFI2}
\cF(\rho, H) = \sum_{i,j}\frac{(p_i - p_j)^2}{2(p_i + p_j)}|\bra{\psi_i}H\ket{\psi_j}|^2.
\ee
It satisfies the following information-theoretic property,
\bea\label{eq:Convexity}
\cF\big(\sum_n\lambda_n\rho_n,H\big) \leq \sum_n\lambda_n\cF(\rho_n,H),
\eea
where $\sum_n\lambda_n = 1, \lambda_n\geq0$, and $\rho_n$ are quantum states. The above inequality means that if several different quantum systems are mixed, the information content of the resulting system is not larger than the average information content of the component systems.
Also, for any pure state $\rho$, the QFI reduces to the variance of the observable $H$:
\be\label{eq:QFI of pure}
\cF(\rho, H) = \cV(\rho, H)= \tr{\rho H^2} - \left(\tr{\rho H}\right)^2.
\ee
In general, if $\rho$ is mixed, we have
\bea
0 \leq \cF(\rho, H) \leq \cV(\rho, H).
\eea

\subsection{Quantum discords}

Recall that quantum discord (entropic version) \cite{Ollivier, Henderson} of a bipartite state $\rho^{ab}$ on a system $\cH^a\ot \cH^b$ shared between parties $a$ and $b$ with marginals $\rho^{a} = \ptr{b}{\rho^{ab}}$ and $\rho^{b} = \ptr{a}{\rho^{ab}}$ can be expressed as
$$D_Q(\rho^{ab})\equiv\min_{\Pi^a}\{I(\rho^{ab}) - I[\Pi^a(\rho^{ab})]\}.$$
Here the minimum is over all von Neumann measurements $\Pi^a = \{\Pi^a_k\}_k$ on party $a$, and
$$\Pi^a(\rho^{ab})\equiv \sum_{k}\big(\Pi^a_k\ot \ids{b}\big)\rho^{ab}\big(\Pi^a_k\ot \ids{b}\big),$$
is the resulting state after the measurement. $I(\rho^{ab}) \equiv S(\rho^{a}) + S(\rho^{b}) - S(\rho^{ab})$ is the quantum mutual information, $S(\rho^{a})\equiv -\tr{\rho^{a}\op{ln}\rho^{a}}$ is the von Neumann entropy, and $\ids{b}$ is the identity operator on $\cH^b$. For pure quantum states, it coincides with (the measure of) entanglement. 
Note that quantum discord is a measure of quantum correlation beyond entanglement in the sense that it can be non-zero even for (mixed) separable quantum states, $\rho^{ab} = \sum_{i}p_{i}\rho_i^{a}\otimes \rho_i^{b}$ \cite{Spehner}.  A separable state is classified as (i) CQ if $\rho_i^{a} = |\psi_i\rangle \langle \psi_i|$, (ii) quantum-classical (QC) if $\rho_i^{b} = |\phi_i\rangle \langle \phi_i|$, and (iii) CC if $\rho_i^{a} = |\psi_i\rangle \langle \psi_i|$ and $\rho_i^{b} = |\phi_i\rangle \langle \phi_i|$. 
These states are obtained by preparing local mixtures of non-orthogonal states, which cannot be perfectly discriminated by local measurements. In general, quantum discord is difficult to calculate, even for two-qubit states \cite{Luo1, Lang}. Another version of discord, namely geometric discord, introduced by Daki$\acute{c}\ et\ al$. \cite{Dakic}, can be equivalently and conveniently formulated as \cite{Luo2}
$$D_G(\rho^{ab})\equiv\min_{\Pi^a}\|\rho^{ab} - \Pi^a(\rho^{ab})\|^2.$$
Consequently, if $\rho^{ab} = \out{\psi}{\psi}$ is a pure state with the Schmidt decomposition $\ket{\psi} = \sum_i \sqrt{s_i}\ket{\alpha_i}\ot\ket{\beta_i}$, then by the result in Ref. \cite{Luo3} we have
$D_G(\rho^{ab}) = 1-\sum_i s_i^2.$
It is known that both these discords vanish iff the bipartite states are uncorrelated or contain only classical correlations, as the CQ states or the CC states \cite{Dakic}.

\section{Quantum correlations in terms of QFI}

\subsection{In terms of QFI over local observables on party $a$}

Let us consider an $M \times N$ bipartite quantum state $\rho^{ab} = \sum_i p_i\out{\psi_i}{\psi_i}$ on the system $\cH^a\ot \cH^b$. In the case where a single party, say party $a$, is driven with the fixed observable $H_A = H_a \ot \id^b$, $i.e., \ \rho^{ab}_\theta = e^{-i\theta H_A}\rho^{ab} e^{i\theta H_A}$, $\cF(\rho^{ab}, H_A)$ is called the lQFI on party $a$.
Further, let us consider that observable $H_A = \out{\varphi}{\varphi}\ot \id^b$. If $[\rho^{ab}, H_A] = 0$, this means that $\ptr{}{\out{\psi_i}{\psi_i}H_A} = 0$ or $\out{\psi_i}{\psi_i} = \out{\varphi}{\varphi}\ot \out{\beta_i}{\beta_i}$ for all $i$, where $\ket{\beta_i}$s are unit vectors of $\cH^b$, and then we have $\cF(\rho^{ab}, H_A) = 0$. This means that there are no changes in the evolution due to the observable $H_A$, and no information can be obtained through any measurement. Here, we stress on the fact that there exists at least one local observable $H_A$ for which the quantum
speed of evolution is zero for the classical-quantum state $\out{\varphi}{\varphi}\ot \sigma^b$, with $\sigma^b$ being any density operator on $\cH^b$.
Based on this, we define a measure to quantify quantum correlations in terms of the lQFI on party $a$, over all orthonormal bases $\{\ket{\varphi_n}\}_n$ of $\cH^a$, as follows
\bea
\cQ_{a,H}(\rho^{ab}) \equiv \min_{\{H_{A,n}=\out{\varphi_n}{\varphi_n}\ot \id^b\}_n}\sum_n\cF(\rho^{ab},H_{A,n}).
\eea
The $\cQ_{a,H}(\rho^{ab})$ represents the minimal sum of the quantum speed of evolution, when it is driven with the local observables $\{H_{A,n}=\out{\varphi_n}{\varphi_n}\ot \id^b\}_n$ in product subspaces of one-dimensional projective space on party $a$ and $\cH^b$. The following are some desirable properties to be inherited from QFI.

(i) \emph{Non-negativity}. $\cQ_{a,H}(\rho^{ab}) \geq 0$, the equality holds when $\rho^{ab}$ is a CQ or a CC state.

(ii) \emph{Local unitary invariance}. $\cQ_{a,H}(\rho^{ab})$ is invariant under any local unitary operation $\Phi_U$ on parties $a$ and $b$, \emph{i.e.}, $\cQ_{a,H}(\rho^{ab})= \cQ_{a,H}\big(\Phi_U(\rho^{ab})\big)$.

(iii) \emph{Contractivity}. $\cQ_{a,H}(\rho^{ab})$ is contractive under any local completely positive trace preserving (l-CPTP) map $\Phi_l$ on party $b$, \emph{i.e.}, $\cQ_{a,H}(\rho^{ab})\geq \cQ_{a,H}\big(\Phi_l(\rho^{ab})\big)$.

Next, to show that $\cQ_{a,H}(\rho^{ab})$ is a reasonable measure of quantum correlation, we introduce the following theorem.

\begin{thrm}\label{th:mainresult}
$\cQ_{a,H}(\rho^{ab})$ vanishes if and only if $\rho^{ab}$ is a zero discord state, which in turn is equivalent to $\rho^{ab}$ being a CQ state or a CC state.
\end{thrm}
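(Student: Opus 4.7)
The plan is to reduce the biconditional to a commutation statement and then use a block-decomposition argument. Since $\cF(\rho^{ab}, H_{A,n})\geq 0$ for every $n$, the sum defining $\cQ_{a,H}(\rho^{ab})$ can vanish for some orthonormal basis $\{\ket{\varphi_n}\}$ of $\cH^a$ iff each summand vanishes. Applied to $H = \out{\varphi_n}{\varphi_n}\ot\id^b$, formula (\ref{eq:QFI2}) tells us that $\cF(\rho^{ab}, H)=0$ forces $(p_i-p_j)\bra{\psi_i}H\ket{\psi_j}=0$ for every pair of eigenvectors of $\rho^{ab}$, which is exactly the statement that $H$ has no matrix elements connecting distinct eigenspaces of $\rho^{ab}$, \emph{i.e.}, $[\rho^{ab}, H]=0$. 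So the first step is to replace the vanishing of $\cQ_{a,H}$ by the existence of an orthonormal basis $\{\ket{\varphi_n}\}$ of $\cH^a$ such that $[\rho^{ab},\out{\varphi_n}{\varphi_n}\ot\id^b]=0$ for every $n$.

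The easy ($\Leftarrow$) direction is then immediate. If $\rho^{ab}=\sum_i p_i \out{\varphi_i}{\varphi_i}\ot\sigma_i^b$ is a CQ state with orthonormal $\{\ket{\varphi_i}\}$ (the CC case being the subcase $\sigma_i^b=\out{\phi_i}{\phi_i}$), one completes $\{\ket{\varphi_i}\}$ to an orthonormal basis of $\cH^a$; each projector $\out{\varphi_n}{\varphi_n}\ot\id^b$ then commutes with $\rho^{ab}$, so every term of the sum vanishes and $\cQ_{a,H}(\rho^{ab})=0$.

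For the converse ($\Rightarrow$), fix a basis $\{\ket{\varphi_n}\}$ realising the commutation and expand
\begin{equation*}
\rho^{ab}=\sum_{m,n}\out{\varphi_m}{\varphi_n}\ot B_{mn},
\end{equation*}
with $B_{mn}$ operators on $\cH^b$ and $B_{nm}=B_{mn}^{\dagger}$. A direct computation gives
\begin{equation*}
[\rho^{ab},\out{\varphi_k}{\varphi_k}\ot\id^b]=\sum_{m\ne k}\out{\varphi_m}{\varphi_k}\ot B_{mk}-\sum_{n\ne k}\out{\varphi_k}{\varphi_n}\ot B_{kn},
\end{equation*}
whose vanishing for every $k$ forces all off-diagonal blocks $B_{mn}$ ($m\ne n$) to be zero. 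Hence $\rho^{ab}=\sum_n\out{\varphi_n}{\varphi_n}\ot B_{nn}$, and positivity and unit trace of $\rho^{ab}$ then force $B_{nn}=p_n\sigma_n^b$ with $p_n\ge 0$, $\sum_n p_n=1$, and each $\sigma_n^b$ a density operator on $\cH^b$. This is precisely the CQ form (and the CC form when each $\sigma_n^b$ happens to be pure).

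The main obstacle is the converse implication, specifically recognising that the simultaneous commutation constraints across a full basis $\{\ket{\varphi_n}\}$ force the desired block-diagonal structure, and that the diagonal blocks inherit positivity and normalisation from $\rho^{ab}$ to produce an honest CQ decomposition. A minor subtlety in the reduction step is handling spectral degeneracy of $\rho^{ab}$: when $p_i=p_j$ the relation $(p_i-p_j)\bra{\psi_i}H\ket{\psi_j}=0$ is automatic, but $\rho^{ab}$ acts as a scalar on the shared eigenspace, so $H$ still commutes with the corresponding block of $\rho^{ab}$ regardless, and the equivalence $\cF(\rho^{ab},H)=0\Leftrightarrow[\rho^{ab},H]=0$ goes through.
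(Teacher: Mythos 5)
Your proof is correct and follows essentially the same route as the paper: reduce the vanishing of $\cQ_{a,H}$ to the commutation conditions $[\rho^{ab},\out{\varphi_n}{\varphi_n}\ot\id^b]=0$ for a full orthonormal basis of $\cH^a$, handle the CQ/CC direction by completing the basis, and extract the classical-quantum form from simultaneous commutation. Your explicit block-decomposition $\rho^{ab}=\sum_{m,n}\out{\varphi_m}{\varphi_n}\ot B_{mn}$ in the converse direction is just a more hands-on rendering of the paper's ``common eigenbasis'' step (and your remark on spectral degeneracy correctly covers the one subtlety in the equivalence $\cF=0\Leftrightarrow[\rho,H]=0$).
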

\begin{proof}
The necessary and sufficient condition for vanishing $\cQ_{a,H}(\rho^{ab})$ is that there exists a set of local observables $\{H_{A,n}=\out{\varphi_n}{\varphi_n}\ot \id^b\}_n$, with $\{\ket{\varphi_n}\}_n$ being an orthonormal basis of $\cH^a$, such that $[\rho^{ab}, H_{A,n}] = 0$ for all $n$ (see Ref. \cite{note3}). First, we consider the case when $\rho^{ab}$ is a CQ state, $\rho^{ab} = \sum_i p_i \out{\alpha_i}{\alpha_i}\ot \sigma_i^b$ where $\iinner{\alpha_i}{\alpha_j} = 0$ for $i\neq j.$ (or a CC state, $\rho^{ab} = \sum_i p_i \out{\alpha_i}{\alpha_i}\ot \out{\beta_i}{\beta_i}$ where $\iinner{\alpha_i}{\alpha_j} = \iinner{\beta_i}{\beta_j} = 0$ for $i\neq j$). Then, when $\{H_{A,i} = \out{\alpha_i}{\alpha_i}\ot \id^b, H'_{A,j} = \out{\gamma_j}{\gamma_j}\ot \id^b\}_{i,j}$, with $\{\ket{\gamma_j}\}_j$ being a basis of $\ker{\big\{\ptr{b}{\rho^{ab}}\big\}}$ (surely, $\{\ket{\alpha_i}, \ket{\gamma_j}\}_{i,j}$ is a basis of $\cH^a$), we have $[\rho^{ab}, H_{A,i}] = 0$ and $[\rho^{ab}, H'_{A,j}] = 0$ for any $i$ or $j$. Therefore, $\cQ_{a,H}(\rho^{ab}) = 0.$

Next, if $\cQ_{a,H}(\rho^{ab}) = 0,$ there exists a set of local observables $\{H_{A,n}=\out{\varphi_n}{\varphi_n}\ot \id^b\}_n$, with $\{\ket{\varphi_n}\}_n$ being an orthonormal basis of $\cH^a$, such that $[\rho^{ab}, H_{A,n}] = 0$ for all $n$. Therefore, $\rho^{ab}$ and $\{H_{A,n}\}$ share a common eigenbasis, and this means that $\rho^{ab} = \sum_{n,m}p_{n,m} \out{\varphi_n}{\varphi_n}\ot\out{\beta_{n,m}}{\beta_{n,m}}$ is a spectral decomposition of $\rho^{ab}$, where $\{\ket{\varphi_n}\ot\ket{\beta_{n,m}}\}_m$ are sets of eigenvectors of $H_{A,n}$ for all $n$. Consequently, the state $\rho^{ab} = \sum_{n}p_{n} \out{\varphi_n}{\varphi_n}\ot\sigma^b_n$ is of the classical-quantum form, where $\sigma^b_n = \sum_m \frac{p_{n,m}}{p_n}\out{\beta_{n,m}}{\beta_{n,m}}$ are density operators on $\cH^b$ and $p_n = \sum_m p_{n,m}$ for all $n$.
\end{proof}

This theorem says that quantum correlations are not captured by the measure $\cQ_{a,H}(\rho^{ab})$ only for zero discord states with respect to measurements on party $a$. That is, like quantum discord, $\cQ_{a,H}(\rho^{ab})$ also vanishes for CQ and CC states. It also means that the presence of quantum correlations can be detected by $\cQ_{a,H}(\rho^{ab})$ in separable quantum states.

Furthermore, for any $2\times N$ bipartite quantum states the above theorem is simplified easily as follows:
Let $\rho^{ab} = \sum_i p_i\out{\psi_i}{\psi_i}$ be a bipartite quantum state on the $2\times N$ system $\cH^a\ot \cH^b$. If there exists a local observable $H_A = \out{\varphi}{\varphi}\ot\id^b$, with $\ket{\varphi}$ being a unit vector of $\cH^a$, such that $\cF(\rho^{ab}, H_A) = 0$ (equivalently, $[\rho^{ab},H_A] = 0$), then $\cQ_{a,H}(\rho^{ab})$ vanishes. This is not difficult to prove. Remembering that $\cH^a$ is a two-dimensional system, there exists another unit vector $\ket{\phi}$ such that $\{\ket{\varphi}, \ket{\phi}\}$ is an orthonormal basis of $\cH^a$. Then, for all $i$ we have $\ket{\psi_i} = \ket{\varphi}\ot\ket{\beta_i}$ or $\ket{\psi_i} = \ket{\phi}\ot\ket{\beta_i}$, where $\ket{\beta_i}$'s are unit vectors of $\cH^b$, because $[\rho^{ab},H_A] = 0$ and $[\out{\psi_i}{\psi_i},H_A] \neq 0$ where $\ket{\psi_i} = (a\ket{\varphi}+b\ket{\phi})\ot\ket{\beta_i}$ with $a^2+b^2 = 1$. Consequently, the state $\rho^{ab} = q_{1} \out{\varphi}{\varphi}\ot\sigma^b_1 + q_{2} \out{\phi}{\phi}\ot\sigma^b_2$ is of the classical-quantum form with $q_1+ q_2=1$, where $\sigma^b_i$ are density operators on $\cH^B$ for $i=$ 1 and 2. However, in general, this is not applicable when $M\geq3$. It is illustrated by the following example.

\begin{exam}
Let $\rho^{ab} = \sum_{i=0}^2 p_i\out{\psi_i}{\psi_i}$ be a bipartite quantum state on the $M\times N$-dimensional system $\cH^a\ot \cH^b$ with $M\geq3$, and $\{\ket{\varphi_n}\}^{M-1}_{n=0}$ is an orthonormal basis of $\cH^a$.
Let
\bea
\ket{\psi_0} &=& \ket{\varphi_0}\ot\ket{\beta_0},\\
\ket{\psi_1} &=& \big(a_1\ket{\varphi_1}+a_2\ket{\varphi_2}\big)\ot\ket{\beta_1},\\
\ket{\psi_2} &=& \big(b_1\ket{\varphi_1}+b_2\ket{\varphi_2}\big)\ot\ket{\beta_2},
\eea
where $a_1b_1 + a_2b_2 \neq 0$ and $\ket{\beta_i}$s are unit vectors of $\cH^b$ with $\iinner{\beta_1}{\beta_2} = 0$. Then, even though there exists a local observable $H_A = \out{\varphi_0}{\varphi_0}\ot \id^b$ such that $\cF(\rho^{ab}, H_A) = 0$, $\cQ_{a,H}(\rho^{ab})$ does not vanish.
Note also that the state $\rho^{ab}$ is different from the above form.
\end{exam}

Similar to the lQFI on party $a$, when the party $b$ is driven with the fixed observable $H_B = \id^a \ot H_b$, $\cF(\rho^{ab}, H_B)$ is called the lQFI on party $b$. Before investigating the lQFI on party $b$, we first introduce some definitions.
Let $\{H_{\mu}\}$ be a complete set of orthonormal observables, that is, $\ptr{}{H_{\mu}H_{\nu}} = \delta_{\mu\nu}$, and $\{H_{\mu}\}$ constitutes a basis for the real Hilbert space of all observables on $\cH$. Then $\sum_\mu\cF(\rho,H_{\mu})$ is independent of the choice of the orthonormal observable basis $\{H_{\mu}\}$ \cite{Li}, that is
\bea
\sum_\mu\cF(\rho,H_{\mu}) = \sum_\nu\cF(\rho,H'_{\nu}),
\eea
where $\{H'_{\nu}\}$ is another orthonormal observable basis.
Similarly, if both $\{H_{b,\mu}\}$ and $\{H'_{b,\nu}\}$ are orthonormal observable bases on $\cH^b$, we may write
\bea
H'_{B,\nu} &=& \id^a \ot H'_{b,\nu} = \sum_\mu c_{\nu\mu}\id^a \ot H_{b,\mu}\\
&=& \sum_\mu c_{\nu\mu}H_{B,\mu}; \quad \nu = 1,2,\ldots,N^2,
\eea
with $\{c_{\nu\mu}\}_{\nu,\mu}$ being an $N^2\times N^2$ real orthonormal matrix, that is,
\bea
\sum_\nu c_{\nu\mu}c_{\nu r} = \delta_{\mu r}, \quad \mu, r = 1,2,\ldots,N^2.
\eea
Then,
\bea
&&\sum_\nu\cF(\rho^{ab},H'_{B,\nu})\\
&=& \sum_\nu\sum_{i,j}\frac{(p_i - p_j)^2}{2(p_i + p_j)}|\bra{\psi_i}\sum_\mu c_{\nu\mu}H_{B,\mu}\ket{\psi_j}|^2\\
&=&\sum_\nu\sum_{i,j}\frac{(p_i - p_j)^2}{2(p_i + p_j)}\bra{\psi_i}\sum_\mu c_{\nu\mu}H_{B,\mu}\ket{\psi_j}\bra{\psi_j}\sum_r c_{\nu r}H_{B,r}\ket{\psi_i}\\
&=&\sum_{\mu r}\big(\sum_\nu c_{\nu\mu}c_{\nu r}\big)\sum_{i,j}\frac{(p_i - p_j)^2}{2(p_i + p_j)}\bra{\psi_i}H_{B,\mu}\ket{\psi_j}\bra{\psi_j}H_{B,r}\ket{\psi_i}\\
&=&\sum_{\mu}\sum_{i,j}\frac{(p_i - p_j)^2}{2(p_i + p_j)}|\bra{\psi_i}H_{B,\mu}\ket{\psi_j}|^2\\
&=&\sum_{\mu}\cF(\rho^{ab},H_{B,\mu}),
\eea
where $\rho^{ab} = \sum_i p_i\out{\psi_i}{\psi_i}$. Thus, $\sum_{\mu}\cF(\rho^{ab},H_{B,\mu})$ is independent of the choice of the orthonormal local observable basis $\{H_{b,\mu}\}$ on $\cH^b$, which means that we can evaluate it using any orthonormal observable basis on $\cH^b$.

\subsection{In terms of QFI over local measurements on party $a$}

Let $\cM^a$ and $\cM^b$ be sets of all local measurements on parties $a$ and $b$, respectively. And $\cM^{a\rightarrow b}$ is the set of joint POVMs that party $b$ performs after party $a$, conditioned on the outcomes of party $a$. Then the MFI for $\cM^{a\rightarrow b}$ is defined as
\bea
\cF(\rho^{ab}, H_B |\cM^{a\rightarrow b}) \equiv \max_{\Lambda\in\cM^{a\rightarrow b}}F(\rho^{ab}_\theta |\Lambda),
\eea
where $\rho^{ab}_\theta = e^{-i\theta H_B}\rho^{ab} e^{i\theta H_B}$ (the MFI can be defined as above, without any logical loss, because the maximum of FI over product measurements $\Lambda\in\cM^{a\rightarrow b}$ is independent of the parameter $\theta$ associated with local observables).  Obviously, $\cF(\rho^{ab}, H_B |\cM^{a\rightarrow b}) \leq \cF(\rho^{ab},H_B)$, because $\cM^{a\rightarrow b}\subseteq \cM^{ab}$ where $\cM^{ab}$ is the entire set of all POVMs on the composite system.
Furthermore, we can infer that the MFI is the maximal FI captured by using only classical correlations with respect to measurements on party $a$, because it is induced by a product measurement that party $b$ performs after party $a$, conditioned on the outcomes of party $a$.
Meanwhile, when the local measurement on party $a$ is fixed to the von Neumann measurement $\Pi^a$, we define $\cM^{\Pi^a\rightarrow b}$ as the set of product measurements that party $b$ performs after party $a$, conditioned on the outcomes of von Neumann measurements $\Pi^a$ of party $a$. Then it can be expressed as \cite{Lu}
\bea
\cF(\rho^{ab}, H_B |\cM^{\Pi^a\rightarrow b})
&=& \cF\big(\Phi_{\Pi^a}(e^{-i\theta H_B}\rho^{ab} e^{i\theta H_B})\big)\\
&=& \cF\big(e^{-i\theta H_B}\Phi_{\Pi^a}(\rho^{ab}) e^{i\theta H_B}\big)\\
&=& \cF(\Phi_{\Pi^a}(\rho^{ab}), H_B)\\
&=& \sum_n p^a(n)\cF(\rho^{b|n}, H_b),
\eea
where $\Phi_{\Pi^a}(\sigma) \equiv \sum_n(\Pi^a_n\ot \id^b) \sigma(\Pi^a_n\ot \id^b)$, and
\be\label{eq:state related to measurement}
\rho^{b|n} \equiv \frac{\ptr{a}{\Pi^a_n\ot \id^b\rho^{ab}}}{p^a(n)},
\ee
with $p^a(n)\equiv\tr{\Pi^a_n\ot \id^b\rho^{ab}}$.

Now we compare these two pieces of FI for the CQ states, $i.e., \rho^{ab} = \sum_{i} p_{i} \out{\alpha_i}{\alpha_i}\ot\rho^{b|i}$. First, lQFI is calculated as
\bea
&&\cF(\rho^{ab},H_B)\\
&=& \sum_{i,j,m,n}\frac{(p_{i}p_{i,n} - p_{j}p_{j,m})^2}{2(p_{i}p_{i,n} + p_{j}p_{j,m})}|\iinner{\alpha_i}{\alpha_j}\bra{\beta_{i,n}}H_b\ket{\beta_{j,m}}|^2\\
&=& \sum_{i,m,n}p_{i}\frac{(p_{i,n} - p_{i,m})^2}{2(p_{i,n} + p_{i,m})}|\bra{\beta_{i,n}}H_b\ket{\beta_{i,m}}|^2\\
&=& \sum_{i}p_{i}\cF(\rho^{b|i},H_b),
\eea
where $\rho^{b|i} = \sum_n p_{i,n}\out{\beta_{i,n}}{\beta_{i,n}}$ for all $i$.
Interestingly, when $\Pi^a = \{\out{\alpha_i}{\alpha_i}\}_i$, we also have
\bea
\cF(\rho^{ab},H_{B}|\cM^{\Pi^a\rightarrow b}) = \sum_{i}p_{i}\cF(\rho^{b|i},H_b).
\eea
Therefore, we obtain that
\bea
\cF(\rho^{ab}, H_B |\cM^{\Pi^a\rightarrow b}) = \cF(\rho^{ab},H_B).
\eea
This means that the lQFI $\cF(\rho^{ab},H_B)$ is read by a joint measurement of parties $a$ and $b$ in $\cM^{\Pi^a\rightarrow b}$ for the CQ states (or CC states) $\rho^{ab}$, and it is applied for all local observables on party $b$. In addition, the party $a$ of the joint measurement that leads lQFI remains unchanged for any local observable on party $b$.

Based on this result, we define the following as an indicator of quantum correlation:
\bea
&&\cQ_{a,\Pi}(\rho^{ab})\\
&\equiv& \sum_{\mu}\cF(\rho^{ab},H_{B,\mu}) - \max_{\Pi^a}\sum_{\mu}\cF(\rho^{ab},H_{B,\mu}|\cM^{\Pi^a\rightarrow b}),
\eea
where the maximum is over all local von Neumann measurements $\Pi^a = \{\Pi^a_n = \out{n}{n}\}$ on party $a$.
Consequently, if $\rho^{ab}$ is any zero discord state, then $\cQ_{a,\Pi}(\rho^{ab}) = 0$.
We prove that the opposite also holds true.

\begin{thrm}\label{th:mainresult2}
$\cQ_{a,\Pi}(\rho^{ab})$ vanishes if and only if $\rho^{ab}$ is a zero discord state.
\end{thrm}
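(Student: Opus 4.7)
The ``if'' direction follows at once from the calculation preceding the theorem: for any CQ state $\rho^{ab}=\sum_i p_i\out{\alpha_i}{\alpha_i}\otimes\rho^{b|i}$, taking $\Pi^a=\{\out{\alpha_i}{\alpha_i}\}$ gives $\cF(\rho^{ab},H_B|\cM^{\Pi^a\to b})=\cF(\rho^{ab},H_B)$ for every local $H_B$, so $\cQ_{a,\Pi}(\rho^{ab})=0$. All the work is in the ``only if'' direction.

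Assume $\cQ_{a,\Pi}(\rho^{ab})=0$, so some VN measurement $\Pi^a=\{\out{n}{n}\}_n$ attains $\sum_\mu\cF(\rho^{ab},H_{B,\mu})=\sum_\mu\cF(\Phi_{\Pi^a}(\rho^{ab}),H_{B,\mu})$. My plan proceeds in three steps. First, since $[\Pi^a_n\otimes\id^b,\,\id^a\otimes H_{b,\mu}]=0$, the dephasing channel $\Phi_{\Pi^a}$ commutes with the encoding $e^{-i\theta H_{B,\mu}}$, so QFI monotonicity under CPTP maps gives $\cF(\Phi_{\Pi^a}(\rho^{ab}),H_{B,\mu})\leq\cF(\rho^{ab},H_{B,\mu})$ for each $\mu$, and equality of the sums forces equality term by term. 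Second, from the spectral formula $\cF(\rho,H)=\sum_{i,j}\frac{(p_i-p_j)^2}{2(p_i+p_j)}|\bra{\psi_i}H\ket{\psi_j}|^2$, one sees that $\cF(\rho,\cdot)$ is a positive-semidefinite quadratic form in $H$, so the nonnegative difference $Q(H_b):=\cF(\rho^{ab},\id^a\otimes H_b)-\cF(\Phi_{\Pi^a}(\rho^{ab}),\id^a\otimes H_b)$ is itself a PSD quadratic form on observables of $\cH^b$. Since a PSD quadratic form that vanishes on an orthonormal basis must vanish identically, the per-$\mu$ equality upgrades to $\cF(\rho^{ab},\id^a\otimes H_b)=\cF(\Phi_{\Pi^a}(\rho^{ab}),\id^a\otimes H_b)$ for every Hermitian $H_b$.

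Third, I would translate this uniform QFI equality into a CQ decomposition of $\rho^{ab}$. The model case is pure $\rho^{ab}=\out{\psi}{\psi}$: writing $\ket{\psi}=\sum_n\ket{n}\otimes\ket{\psi_n^b}$ and using the variance formula for pure-state QFI, the equality collapses to the Cauchy--Schwarz saturation $(\sum_n\bra{\psi_n^b}H_b\ket{\psi_n^b})^2=\sum_n\|\psi_n^b\|^{-2}\bra{\psi_n^b}H_b\ket{\psi_n^b}^2$, which holds for every $H_b$ only when the normalized vectors $\ket{\psi_n^b}/\|\psi_n^b\|$ all coincide, i.e., $\ket{\psi}$ is a product state (trivially CQ). For mixed $\rho^{ab}$ I would exploit the random-phase representation $\Phi_{\Pi^a}(\cdot)=\int d\boldsymbol{\phi}\,(V^a_{\boldsymbol{\phi}}\otimes\id^b)(\cdot)(V^{a,\dagger}_{\boldsymbol{\phi}}\otimes\id^b)$ with $V^a_{\boldsymbol{\phi}}=\sum_n e^{i\phi_n}\out{n}{n}$, note that each such unitary commutes with every $\id^a\otimes H_b$, and invoke the equality condition of QFI convexity (a Petz-recovery type condition) to force each SLD $L^{(H_b)}$ to commute with all $V^a_{\boldsymbol{\phi}}\otimes\id^b$; feeding this block-diagonal SLD back into the Lyapunov equation $\partial_\theta\rho^{ab}_\theta=(L\rho^{ab}_\theta+\rho^{ab}_\theta L)/2$ then pins $\rho^{ab}$ to block-diagonal form in some orthonormal basis on $\cH^a$, i.e., CQ.

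The main obstacle is this last step. QFI is only convex, not strictly convex, so equality cannot be converted into a naive state invariance $\rho^{ab}=\Phi_{\Pi^a}(\rho^{ab})$: already $\rho^{ab}=\out{+}{+}\otimes\rho^b$ is CQ in the $\{\ket{\pm}\}$ basis while QFI-equality holds also for the non-aligned choice $\Pi^a=\{\out{0}{0},\out{1}{1}\}$. The CQ basis of $\rho^{ab}$ therefore need not coincide with the maximizing $\{\ket{n}\}$, and the structural conclusion has to be extracted from the SLD/Petz equality condition rather than from invariance of the state under the dephasing channel.
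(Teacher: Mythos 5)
Your ``if'' direction and the first two reduction steps are sound. In particular, the observation that $Q(H_b):=\cF(\rho^{ab},\id^a\ot H_b)-\cF(\Phi_{\Pi^a}(\rho^{ab}),\id^a\ot H_b)$ is a nonnegative, hence positive-semidefinite, quadratic form on the real space of observables of $\cH^b$, so that its vanishing on an orthonormal observable basis forces $Q\equiv 0$, is a clean way to upgrade the termwise equality to equality for every $H_b$; the paper needs exactly this upgrade and obtains it less transparently (by splitting into the case where the maximizing $\Pi^a$ varies with $\mu$ and the case where it does not). Your pure-state analysis via saturation of Cauchy--Schwarz is also correct and matches what the paper does for pure states in Sec.~IV.

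The gap is where you yourself locate it, and it is fatal as the argument stands: for mixed $\rho^{ab}$ you never actually derive the CQ structure from the uniform equality $\cF(\rho^{ab},\id^a\ot H_b)=\sum_n p^a(n)\cF(\rho^{b|n},H_b)$. The mechanism you propose --- equality forces each SLD to commute with every $V^a_{\boldsymbol{\phi}}\ot\id^b$, and a block-diagonal SLD in the Lyapunov equation ``pins $\rho^{ab}$ to block-diagonal form'' --- is refuted by your own example: for $\rho^{ab}=\out{+}{+}\ot\rho^b$ and $\Pi^a=\{\out{0}{0},\out{1}{1}\}$ the SLD of $\id^a\ot H_b$ is $\id^a\ot L_b$, which commutes with every $V^a_{\boldsymbol{\phi}}\ot\id^b$, yet $\rho^{ab}$ is not block diagonal in $\{\ket{0},\ket{1}\}$. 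So block-diagonality of the SLD does not pin the state, and the ``Petz-recovery type'' equality condition is not stated precisely enough to extract the existence of some other basis in which $\rho^{ab}$ is CQ. The paper closes this step by a different, concrete device: assuming equality for all $H_b$, it chooses $H_{b,1},H_{b,2}$ with $[\rho^{b|1},H_{b,1}]=[\rho^{b|2},H_{b,2}]=0$, evaluates $\cF$ at the average $(H_{b,1}+H_{b,2})/2$, and isolates the cross term $\sum_{i,j}\frac{(p_i-p_j)^2}{p_i+p_j}\bra{\psi_i}H_{B,1}\ket{\psi_j}\bra{\psi_j}H_{B,2}\ket{\psi_i}$, which must vanish and is argued to vanish only for CQ states; the many-outcome case is then reduced to this two-outcome case via the coarse-graining channels $\Psi_{k,l}$ and QFI monotonicity, showing each two-block restriction $\rho^{ab}_{k,l}$ is CQ. You would need to supply an argument of comparable concreteness (or fully state and verify the sufficiency/recovery criterion you allude to) before the ``only if'' direction can be considered proved.
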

\begin{proof}
We only need to prove that $\rho^{ab}$ is a CQ state (or a CC state) if $\cQ_{a,\Pi}(\rho^{ab}) = 0$.
In general, when the state $\rho^{ab}$ is not a CQ state, we have the following relations:
\bea
\sum_\mu \cF(\rho^{ab},H_{B,\mu})
&\geq& \sum_\mu \max_{\Pi^a}\cF(\rho^{ab},H_{B,\mu}|\cM^{\Pi^a\rightarrow b})\\
&\geq& \max_{\Pi^a}\sum_{\mu}\cF(\rho^{ab},H_{B,\mu}|\cM^{\Pi^a\rightarrow b}).
\eea
If the local von Neumann measurement on party $a$ that induces the maximum of MFI for at least one local observable on party $b$ is not the same as the others, then
\bea
&&\sum_\mu \max_{\Pi^a}\cF(\rho^{ab},H_{B,\mu}|\cM^{\Pi^a\rightarrow b})\quad\\
&>& \max_{\Pi^a}\sum_{\mu}\cF(\rho^{ab},H_{B,\mu}|\cM^{\Pi^a\rightarrow b}).
\eea
This means that $\cQ_{a,\Pi}(\rho^{ab}) > 0$.

On the other hand, if the local measurements on party $a$ that induce the maximum of MFI for all local observables on party $b$ are identical (let the local measurement be $\Pi^a$),
then there is at least one observable $H_B = \id^a\ot H_b$, such that
\bea
\cF(\rho^{ab},H_{B}) &>& \cF(\rho^{ab},H_{B}|\cM^{\Pi^a\rightarrow b})\\
&=& \sum_n p^a(n)\cF(\rho^{b|n}, H_b)
\eea
with the states $\rho^{b|n}$ of Eq. (\ref{eq:state related to measurement}) related to the $\Pi^a$.
This means that $\cQ_{a,\Pi}(\rho^{ab}) > 0$.

To prove this, let us first assume that
\bea
\cF(\rho^{ab},H_{B}) &=& \cF(\rho^{ab},H_{B}|\cM^{\Pi^a\rightarrow b})\\
&=& p_1\cF(\rho^{b|1},H_{b}) + p_2\cF(\rho^{b|2},H_{b})
\eea
for any local observable $H_{B} = \id^a \ot H_{b}$, and let two observables $H_{b,1}$ and $H_{b,2}$, satisfy $[\rho^{b|1}, H_{b,1}] = 0$ and $[\rho^{b|2}, H_{b,2}] = 0$, respectively. Then,
\bea
\cF(\rho^{ab},H_{B,1}) = p_2\cF(\rho^{b|2},H_{b,1})
\eea
and
\bea
\cF(\rho^{ab},H_{B,2}) = p_1\cF(\rho^{b|1},H_{b,2}).
\eea
Also, when $H'_b = (H_{b,1} + H_{b,2})/2$,
\bea
\cF(\rho^{ab},H'_{B}|\cM^{\Pi^a\rightarrow b}) =  \frac{p_1\cF(\rho^{b|1},H_{b}) + p_2\cF(\rho^{b|2},H_{b})}{4},
\eea
and
\bea
\cF(\rho^{ab},H'_{B}) &=& \frac{\cF(\rho^{ab},H_{B,1}) + \cF(\rho^{ab},H_{B,2})}{4}\\
&+& \sum_{i,j}\frac{(p_i-p_j)^2}{p_i+p_j}\bra{\psi_i}H_{B,1}\ket{\psi_j}\bra{\psi_j}H_{B,2}\ket{\psi_i}\\
&=& \frac{p_1\cF(\rho^{b|1},H_{b}) + p_2\cF(\rho^{b|2},H_{b})}{4}\\
&+& \sum_{i,j}\frac{(p_i-p_j)^2}{p_i+p_j}\bra{\psi_i}H_{B,1}\ket{\psi_j}\bra{\psi_j}H_{B,2}\ket{\psi_i},
\eea
where $\rho^{ab} = \sum_i p_i\out{\psi_i}{\psi_i}$ and $H'_{B} = \id^a \ot H'_b$.
But, only when $\rho^{ab}$ is a CQ state (\emph{i.e.}, $\rho^{ab} = p_1\out{1}{1}\ot\rho^{b|1} + p_2\out{2}{2}\ot\rho^{b|2}$), we have $$\sum_{i,j}\frac{(p_i-p_j)^2}{p_i+p_j}\bra{\psi_i}H_{B,1}\ket{\psi_j}\bra{\psi_j}H_{B,2}\ket{\psi_i} = 0,$$ for local observables $H_{B,1} = \id^a \ot H_{b,1}$ and $H_{B,2} = \id^a \ot H_{b,2}$ satisfying $[\rho^{b|1}, H_{b,1}] = 0$ and $[\rho^{b|2}, H_{b,2}] = 0$ respectively. This means that $\cF(\rho^{ab},H'_{B}) > \cF(\rho^{ab},H'_{B}|\cM^{\Pi^a\rightarrow b})$ when $\rho^{ab}$ is not a CQ state.

For generalization, we assume that\\
\parbox{8.3cm}{
\begin{eqnarray*}
\cF(\rho^{ab},H_{B}) &=& \cF(\rho^{ab},H_{B}|\cM^{\Pi^a\rightarrow b})\\
&=& \sum_n p^a(n)\cF(\rho^{b|n},H_{b}),
\end{eqnarray*}}\hfill
\parbox{.3cm}{\begin{eqnarray}\label{eq:1}\end{eqnarray}}\\
for any local observable $H_{B} = \id^a \ot H_{b}$. Let $\Psi_{k,l}$ be local operations on party $a$, defined as
\bea
\Psi_{k,l}(\sigma) &=& \sum_{n\neq k,l} (\Pi^a_n\ot \id^b) \sigma(\Pi^a_n\ot \id^b)\\
&& + \left((\Pi^a_k+\Pi^a_l)\ot \id^b\right) \sigma\left((\Pi^a_k+\Pi^a_l)\ot \id^b\right),
\eea
for $k\neq l$ and any state $\sigma$. Then, for any local observable $H_{B} = \id^a \ot H_b$ and any $k\neq l$, we have
\bea
&&\cF(\rho^{ab},H_{B})\\
&\geq& \cF(\Psi_{k,l}(\rho^{ab}),H_{B})\\
&=& \sum_{n\neq k,l} p^a(n)\cF(\rho^{b|n},H_{b}) + (p^a(k) + p^a(l))\cF(\rho^{ab}_{k,l},H_{B})\\
&\geq& \sum_{n} p^a(n)\cF(\rho^{b|n},H_{b}),
\eea
where $$\rho^{ab}_{k,l} = \frac{\left((\Pi^a_k+\Pi^a_l)\ot \id^b\right) \rho^{ab}\left((\Pi^a_k+\Pi^a_l)\ot \id^b\right)}{p^a(k)
+ p^a(l)}.$$
The first inequality follows by the property of lQFI \cite{Luo5}, and the third inequality follows by
\bea
\cF(\rho^{ab}_{k,l},H_{B}) &\geq& \cF(\rho^{ab}_{k,l},H_{B}|\cM^{\Pi^a\rightarrow b})\\
&=& \frac{p^a(k)\cF(\rho^{b|k},H_{b}) + p^a(l)\cF(\rho^{b|l},H_{b})}{p^a(k) + p^a(l)}.
\eea
So, from Eq.(\ref{eq:1}), we have
\bea
\cF(\rho^{ab}_{k,l},H_{B}) = \frac{p^a(k)\cF(\rho^{b|k},H_{b}) + p^a(l)\cF(\rho^{b|l},H_{b})}{p^a(k)
+ p^a(l)},
\eea
for all $k\neq l.$ Therefore, $\rho^{ab}_{k,l}$ are the CQ states for all $k\neq l$, implying that $\rho^{ab}$ is a CQ state.
\end{proof}

Generally, when $\rho^{ab}$ is not a CQ state (or a CC state), the local measurements on party $a$ that induce the maximum of MFI for the local observables on party $b$ are not all the same. For example, if $\rho^{ab}$ is pure but not a CC state, with the Schmidt decomposition $\ket{\psi} = \sum_{i=1}^L \sqrt{s_i}\ket{\alpha_i}\ot\ket{\beta_i}$ where $1<L\leq \min\{M,N\}$, then the lQFI can be evaluated, with $H_B = \id^a\ot H_b$, as
\\
\parbox{8.3cm}{
\begin{eqnarray*}
&& \cF(\rho^{ab},H_B)\\
&=& \sum_i^L s_i \bra{\beta_i}H_b^2\ket{\beta_i} - \big(\sum_i^L s_i \bra{\beta_i}H_b\ket{\beta_i}\big)^2.
\end{eqnarray*}}\hfill
\parbox{.3cm}{\begin{eqnarray}\label{eq:lQFI in pure}\end{eqnarray}}

On the other hand, in the evaluation of MFI, the state $\rho^{b|n}$ in Eq. (\ref{eq:state related to measurement}) is pure for any $n$, as seen below:
\bea
\rho^{b|n} &=& \frac{\ptr{a}{\Pi^a_n\ot\id^b\rho^{ab}}}{p^a(n)}\\
&=& \frac{\sum^L_{i,j}\sqrt{s_is_j}\iinner{\alpha_j}{n}\iinner{n}{\alpha_i}\out{\beta_i}{\beta_j}}{p^a(n)}\\
&=& \frac{(\sum^L_i\sqrt{s_i}\iinner{n}{\alpha_i}\ket{\beta_i})(\sum^L_i\sqrt{s_i}\iinner{n}{\alpha_i}\ket{\beta_i})^{\ast}}{p^a(n)},
\eea
where $\Pi^a_n = \out{n}{n}$ and $p^a(n) = \sum^L_is_i|\iinner{\alpha_i}{n}|^2$. Therefore, the MFI can be evaluated as\\
\parbox{8.3cm}{
\begin{eqnarray*}
& &\sum_np^a(n)\cF(\rho^{b|n},H_b)\\
&=&
\sum_np^a(n)\bigg\{\frac{\sum^L_{i,j}\sqrt{s_is_j}\iinner{\alpha_j}{n}\iinner{n}{\alpha_i}\bra{\beta_j}H^2_b\ket{\beta_i}}{p^a(n)}\\
&&-\bigg(\frac{\sum^L_{i,j}\sqrt{s_is_j}\iinner{\alpha_j}{n}\iinner{n}{\alpha_i}\bra{\beta_j}H_b\ket{\beta_i}}{p^a(n)}\bigg)^2\bigg\}\\
&=& \sum^L_is_i\bra{\beta_i}H^2_b\ket{\beta_i}\\
&&-\sum_n\frac{\big(\sum^L_{i,j}\sqrt{s_is_j}\iinner{\alpha_j}{n}\iinner{n}{\alpha_i}\bra{\beta_j}H_b\ket{\beta_i}\big)^2}{p^a(n)}.
\end{eqnarray*}}\hfill
\parbox{.3cm}{\begin{eqnarray}\label{eq:MFI in pure}\end{eqnarray}}\\

Since $\cF(\rho^{ab},H_{B}) \geq \sum_{n} p^a(n)\cF(\rho^{b|n},H_{b})$, we obtain the following inequality:
\bea
(1) &\equiv& \big(\sum_i^L s_i \bra{\beta_i}H_b\ket{\beta_i}\big)^2\\
&\leq& \sum_n\frac{\big(\sum^L_{i,j}\sqrt{s_is_j}\iinner{\alpha_j}{n}\iinner{n}{\alpha_i}\bra{\beta_j}H_b\ket{\beta_i}\big)^2}{\sum^L_is_i|\iinner{\alpha_i}{n}|^2} \equiv (2).
\eea
Next, we can see that the local measurements on party $a$ that induce the maximum of MFI are different in the following two situations.

(i) If $[\beta_i, H_b] = 0$ for all $i$, then, by using the measurement $\Pi^a$ that satisfies $|\iinner{\alpha_i}{n}|^2 = \frac{1}{M}$ for all $i$ and $n$, we have
\bea
(2) &=& \sum_n\frac{\big((1/M)\sum^L_{i}s_i\bra{\beta_i}H_b\ket{\beta_i}\big)^2}{(1/M)\sum^L_is_i}\\
&=& \big(\sum_i^L s_i \bra{\beta_i}H_b\ket{\beta_i}\big)^2 = (1).
\eea

(ii) If there exists a real number $r\neq0$ such that $\bra{\beta_i}H_b\ket{\beta_i} = r$ for all $i$, when $\bra{\beta_j}H_b\ket{\beta_i} = 0$ or $\iinner{\alpha_j}{n}\iinner{n}{\alpha_i} = 0$ for all $n$ and $i\neq j$, then
\bea
(2) = \sum_n\frac{\big(r\sum^L_{i}s_i|\iinner{\alpha_i}{n}|^2\big)^2}{\sum^L_is_i|\iinner{\alpha_i}{n}|^2} = r^2 = (1).
\eea

Consequently, the quantity $\cQ_{a,\Pi}$ has the following properties:

(i)\label{1}  $\cQ_{a,\Pi}(\rho^{ab}) \geq 0$, the equality holds only when $\rho^{ab}$ is a CQ or a CC state.

(ii)\label{2} $\cQ_{a,\Pi}(\rho^{ab})$ is invariant under any local unitary operation $\Phi_U$ on parties $a$ and $b$, \emph{i.e.}, $\cQ_{a,\Pi}(\rho^{ab})= \cQ_{a,\Pi}\big(\Phi_U(\rho^{ab})\big)$.

Property (ii) follows readily from the corresponding property of QFI.

Besides, we expect that $\cQ_{a,\Pi}(\rho^{ab})$ is contractive under any lCPTP maps $\Phi_b$ on party $b$, \emph{i.e.}, $\cQ_{a,\Pi}(\rho^{ab})\geq \cQ_{a,\Pi}\big(\Phi_b(\rho^{ab})\big)$, then the quantity $\cQ_{a,\Pi}$ is not only an indicator but also a full-fledged measure of quantum correlations. However, we do not have a proof at present and leave it as an open problem.

\section{Quantification of quantum correlations in pure states}

In this section, we evaluate the quantities $\cQ_{a,H}$ and $\cQ_{a,\Pi}$ for pure quantum states.
Already, nonclassical correlations have been computed for arbitrary $2\times N$ bipartite pure states $\rho^{ab}$ \cite{Girolami} utilizing skew information. Here, we generalize it by calculating the quantified quantum correlations we defined for arbitrary $M\times N$ bipartite pure states $\rho^{ab}$.
Let us consider that $\rho^{ab} = \out{\psi}{\psi}$ is a pure state, with the Schmidt decomposition $\ket{\psi} = \sum_i \sqrt{s_i}\ket{\alpha_i}\ot\ket{\beta_i}$. Then the quantities $\cQ_{a,H}(\rho^{ab})$ and $\cQ_{a,\Pi}(\rho^{ab})$ coincide with the geometric discord $D_G(\rho^{ab})$, that is
\bea
\cQ_{a,H}(\rho^{ab}) = \cQ_{a,\Pi}(\rho^{ab})= D_G(\rho^{ab}) = 1-\sum_i s_i^2.
\eea

To show that $\cQ_{a,H}(\rho^{ab})$ equals geometric discord for pure quantum states, we need the following lemma.
\begin{lem}\label{lem:lem-1}\cite{Watrous}
Let $\cH$ be a Hilbert space, $H$ be a Hermitian operator on $\cH$, and $\{\ket{x_n}\}_n$ an orthonormal basis of $\cH$. Define a vector $u$ as $u(n) = \bra{x_n}H\ket{x_n}$, then $u \prec v$ (called $v$  majorizes $u$, if there exists a doubly stochastic operator $A$ such that $v = Au$), where $v$ is a vector of eigenvalues of $H$. Furthermore, we can easily show that $\sum_iu_i^2 \leq \sum_j v_j^2$ when $u\prec v.$
\end{lem}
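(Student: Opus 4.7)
The lemma has two distinct assertions: (a) the vector $u$ of diagonal entries of a Hermitian operator in an arbitrary orthonormal basis is majorized by the eigenvalue vector $v$, and (b) majorization $u\prec v$ implies $\sum_i u_i^2 \leq \sum_j v_j^2$. My plan is to treat these as separate claims and prove each from scratch using only the spectral theorem and convexity, since they are classical but independent facts.

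For (a) I would start from the spectral decomposition $H = UDU^*$, where $D=\op{diag}(v_1,\ldots,v_N)$ is the diagonal matrix of eigenvalues and $U$ is unitary. Expanding $\bra{x_n}H\ket{x_n}$ in the eigenbasis of $H$ gives
\begin{equation*}
u(n) = \bra{x_n}H\ket{x_n} = \sum_j |U_{nj}|^2\, v_j.
\end{equation*}
Define the matrix $A$ with entries $A_{nj}:=|U_{nj}|^2$. Unitarity of $U$ gives $\sum_n |U_{nj}|^2 = \sum_j |U_{nj}|^2 = 1$, so both row sums and column sums of $A$ equal $1$, making $A$ doubly stochastic. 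Thus $u=Av$, which is precisely the statement $u\prec v$. (I would note that the convention displayed in the lemma should read $u=Av$ rather than $v=Au$; I will state this clearly to avoid confusion.)

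For (b) I would use the convexity of $t\mapsto t^2$ together with the doubly-stochastic representation from part (a). Writing $u_i = \sum_j A_{ij} v_j$ with $\sum_j A_{ij} = 1$ and $A_{ij}\geq 0$, Jensen's inequality yields
\begin{equation*}
u_i^2 = \Bigl(\sum_j A_{ij} v_j\Bigr)^{\!2} \;\leq\; \sum_j A_{ij}\, v_j^2.
\end{equation*}
Summing over $i$ and swapping the order of summation,
\begin{equation*}
\sum_i u_i^2 \;\leq\; \sum_j \Bigl(\sum_i A_{ij}\Bigr) v_j^2 \;=\; \sum_j v_j^2,
\end{equation*}
where the last equality uses the column-stochastic property $\sum_i A_{ij}=1$. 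This closes the argument.

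There is no serious obstacle here: the result is a textbook fact and both halves follow from the spectral theorem plus convexity. The only subtle point is keeping the direction of the majorization correct (doubly stochastic maps contract in the sense that they average, so the diagonal $u$ sits in the convex hull of permutations of the spectrum $v$), and making sure the two ingredients — unitarity producing a doubly stochastic matrix, and Jensen for the square — are invoked in the right order. If one wanted a more general version, the inequality in (b) extends verbatim to any convex $\phi$, giving $\sum_i\phi(u_i)\leq\sum_j\phi(v_j)$, but only the quadratic case is used later in the paper.
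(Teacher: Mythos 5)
Your proof is correct. The paper itself offers no proof of this lemma---it simply cites Watrous---so there is nothing to compare against beyond the standard argument, which is exactly what you give: the Schur--Horn direction via the doubly stochastic matrix $A_{nj}=|U_{nj}|^2$ built from the diagonalizing unitary, followed by Jensen's inequality for $t\mapsto t^2$ and the column-sum condition. Both steps are sound. You are also right to flag that the parenthetical definition in the lemma has the majorization convention backwards: the standard statement (and the one your argument actually establishes, and the one needed for the inequality $\sum_i u_i^2\leq\sum_j v_j^2$ to point the right way) is $u=Av$ for doubly stochastic $A$, not $v=Au$; taken literally, the paper's wording would yield the reverse inequality. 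Your closing remark that the quadratic case generalizes to any convex $\phi$ is a correct and harmless bonus.
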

In general, when QFI  is driven with local observables $\{H_{A,n}=\out{\varphi_n}{\varphi_n}\ot \id^b\}_n$, where $\{\ket{\varphi_n}\}_n$ is an orthonormal basis of $\cH^a$, we have
\bea
&&\sum_n\cF(\rho^{ab},H_{A,n})\\
&=& \sum_n \bigg[\tr{\rho^{ab}H_{A,n}^2} - \{\tr{\rho^{ab}H_{A,n}}\}^2\bigg]\\
&=& \sum_n\bigg\{\bra{\psi}H_{A,n}^2\ket{\psi} - (\bra{\psi}H_{A,n}\ket{\psi})^2\bigg\}\\
&=& \sum_n\bigg\{\sum_{i}s_i|\iinner{\alpha_i}{\varphi_n}|^2 - \big(\sum_{i}s_i|\iinner{\alpha_i}{\varphi_n}|^2\big)^2\bigg\}\\
&=& 1-\sum_n \big(\sum_{i}s_i|\iinner{\alpha_i}{\varphi_n}|^2\big)^2\\
&\geq& 1-\sum_is_i^2.
\eea
The first equality follows from Eq. (\ref{eq:QFI of pure}) for pure quantum states, and the sixth inequality follows using $$\sum_{i}s_i|\iinner{\alpha_i}{\varphi_n}|^2 = \bra{\varphi_n}\big(\sum_{i}s_i\out{\alpha_i}{\alpha_i}\big)\ket{\varphi_n}$$
and Lemma \ref{lem:lem-1}.
The minimum value of $\cQ_{a,H}(\rho^{ab})$ is then established identically that $\cQ_{a,H}(\rho^{ab})\geq 1-\sum_is_i^2$.

In particular, we may take
$$\{H_{A,i}=\out{\alpha_i}{\alpha_i}\ot \id^b, H'_{A,j}=\out{\gamma_j}{\gamma_j}\ot \id^b\}_{i,j},$$
where $\{\ket{\gamma_j}\}_j$ is a basis of $\ker{\big\{\ptr{b}{\out{\psi}{\psi}}\big\}}$. Then, by straightforward calculations, we obtain
\bea
\sum_i\cF(\rho^{ab},H_{A,i}) + \sum_j\cF(\rho^{ab},H'_{A,j}) = 1-\sum_is_i^2.
\eea
Hence, $\cQ_{a,H}(\rho^{ab})= 1-\sum_is_i^2$.

Next, for $\cQ_{a,\Pi}(\rho^{ab})$, we have the following relations from Eqs. (\ref{eq:lQFI in pure}) and (\ref{eq:MFI in pure}),
\bea
&&\cQ_{a,\Pi}(\rho^{ab})\\
&=& \sum_\mu\cF(\rho^{ab},H_{B,\mu}) - \max_{\Pi^a}\sum_\mu \big(\sum_np^a(n)\cF(\rho^{b|n},H_{b,\mu}) \big)\\
&=&\min_{\Pi^a}\sum_\mu \left(\sum_n\frac{\big(\sum_{i,j}\sqrt{s_is_j}\iinner{\alpha_j}{n}\iinner{n}{\alpha_i}\bra{\beta_j}H_{b,\mu}\ket{\beta_i}\big)^2}{\sum_is_i|\iinner{\alpha_i}{n}|^2} \right)\\
&& - \sum_\mu\big(\sum_i s_i \bra{\beta_i}H_{b,\mu}\ket{\beta_i}\big)^2,
\eea
where $\Pi^a_n = \out{n}{n}$ and $\{H_{b,\mu}\}$ is any orthonormal observable basis on $\cH^b$.
In particular, we may take
\bea
\{H_{b,\mu}\} = \{E_k, E_{k,l}^+, E_{k,l}^-\}
\eea
with
\bea
E_k &=& \out{\beta_k}{\beta_k}, \qquad k = 1,2,\ldots,N,\\
E_{k,l}^+ &=& \frac{1}{\sqrt{2}}(\out{\beta_k}{\beta_l} + \out{\beta_l}{\beta_k}), \ k<l,\ k,l = 1,2,\ldots,N,\\
E_{k,l}^- &=& \frac{i}{\sqrt{2}}(\out{\beta_k}{\beta_l} - \out{\beta_l}{\beta_k}), \ k<l,\ k,l = 1,2,\ldots,N.
\eea
Then, by straightforward calculations, we have
\begin{widetext}
\bea
\big(\sum_i s_i \bra{\beta_i}E_k\ket{\beta_i}\big)^2 &=& s_k^2,\\
\big(\sum_i s_i \bra{\beta_i}E_{k,l}^+\ket{\beta_i}\big)^2 &=& \big(\sum_i s_i \bra{\beta_i}E_{k,l}^-\ket{\beta_i}\big)^2 = 0,\\
\sum_n\frac{\big(\sum_{i,j}\sqrt{s_is_j}\iinner{\alpha_j}{n}\iinner{n}{\alpha_i}\bra{\beta_j}E_k\ket{\beta_i}\big)^2}{\sum_is_i|\iinner{\alpha_i}{n}|^2} &=& \sum_n\frac{\big(s_k|\iinner{\alpha_k}{n}|^2\big)^2}{\sum_is_i|\iinner{\alpha_i}{n}|^2},\\
\sum_n\frac{\big(\sum_{i,j}\sqrt{s_is_j}\iinner{\alpha_j}{n}\iinner{n}{\alpha_i}\bra{\beta_j}E_{k,l}^+\ket{\beta_i}\big)^2}{\sum_is_i|\iinner{\alpha_i}{n}|^2} &=& \sum_n\frac{s_ks_l\big(\iinner{\alpha_k}{n}\iinner{n}{\alpha_l} + \iinner{\alpha_l}{n}\iinner{n}{\alpha_k}\big)^2}{2\sum_is_i|\iinner{\alpha_i}{n}|^2},\\
\sum_n\frac{\big(\sum_{i,j}\sqrt{s_is_j}\iinner{\alpha_j}{n}\iinner{n}{\alpha_i}\bra{\beta_j}E_{k,l}^-\ket{\beta_i}\big)^2}{\sum_is_i|\iinner{\alpha_i}{n}|^2} &=& -\sum_n\frac{s_ks_l\big(\iinner{\alpha_k}{n}\iinner{n}{\alpha_l} - \iinner{\alpha_l}{n}\iinner{n}{\alpha_k}\big)^2}{2\sum_is_i|\iinner{\alpha_i}{n}|^2},
\eea
and consequently,
\bea
\cQ_{a,\Pi}(\rho^{ab}) &=& \min_{\Pi^a}\sum_n \left(\sum_k \frac{\big(s_k|\iinner{\alpha_k}{n}|^2\big)^2}{\sum_is_i|\iinner{\alpha_i}{n}|^2}
+ 2\sum_{k<l} \frac{ s_ks_l|\iinner{\alpha_k}{n}|^2|\iinner{\alpha_l}{n}|^2}{\sum_is_i|\iinner{\alpha_i}{n}|^2} \right) - \sum_k s_k^2\\
&=& \min_{\Pi^a}\sum_n \left(\frac{\big(\sum_k s_k|\iinner{\alpha_k}{n}|^2\big)^2}{\sum_is_i|\iinner{\alpha_i}{n}|^2} \right) - \sum_k s_k^2\\
&=& 1 - \sum_k s_k^2.
\eea
\end{widetext}
Also, we can see that $\cQ_{a,\Pi}(\rho^{ab})$ is independent of the choice of local measurement $\Pi^a$ on party $a$ for any pure state $\rho^{ab}$. This means that we can capture it in terms of any local von Neumann measurement on party $a$.

In particular, on system $\cH^a\ot \cH^b$ with $\dim\cH^a = \dim\cH^b = M$, the two quantities achieve the maximum value $1-1/M$ for any maximally entangled pure state.

\section{Conclusions}

In this paper, we have related quantum Fisher information, a central figure of merit in quantum estimation theory, to the analysis of quantum correlations. In particular, we studied the characterization of quantum correlations from the perspective of the lQFI on parties $a$ and $b$. To do this, we surveyed the lQFI in two ways (via local observables and via local measurements). We first proposed a measure of quantum correlations based on the lQFI for local observables, and showed that it qualifies for a reasonable measure of quantum correlations. This means that we can quantify the quantum correlations through the measure utilizing QFI.

Also, by investigating the hierarchy of measurement-induced local Fisher information, we have shown that the lQFI, for zero discord states only, can be induced by performing a joint measurement, the parts of which act independently on their respective local systems, and is associated with classical correlations only. This means that the presence of quantum correlations leads to the instantaneous variation of evolution that cannot be confirmed by any joint measurement in the bipartite quantum state.
Based on this, we define an indicator of quantum correlations by the difference between the lQFI and the maximum of measurement-induced Fisher information over all local von Neumann measurements. The indicator has still to satisfy the condition that it is contractive under any local map on party $b$ in order to become the full-fledged measure of quantum correlations, but we leave this as an open problem.

Finally, we showed that these two quantities coincide with the geometric discord for any pure quantum state.

\begin{acknowledgments}
This project is supported by the National Natural Science Foundation of China (Grants No. 11171301 and No. 11571307).
\end{acknowledgments}


\end{document}